\newcommand{\orcid}[1]{}
\newcommand{\isAPAStyle}[2]{}
\newcommand{\isChicagoStyle}[2]{}
\newcommand{\pubyear}[1]{}
\newcommand{\copyrightyear}[1]{}
\newcommand{\hreflink}[1]{}
\newcommand{\externaleditor}[1]{}
\newcommand{\firstpage}[1]{}
\newcommand{\articlenumber}[1]{}
\newcommand{\issuenum}[1]{}
\newcommand{\pubvolume}[1]{}
\theoremstyle{definition}
\providecommand{\keywords}[1]{\bigskip\noindent\textbf{Keywords: } #1}
\newcommand{\Tr}{\mathrm{Tr}}
\newcommand{\tr}{\mathrm{tr}}
\newcommand{\calX}{\mathcal{X}}
\newcommand{\bbR}{\mathbb{R}}
\begin{document}

\title{Hybrid Cram\'er-Rao bound for Quantum Bayes-Point Estimation with Nuisance Parameters}

\author[1,*]{Jianchao Zhang}
\author[1,2]{Jun Suzuki}
\affil[1]{Graduate School of Informatics and Engineering, The University of Electro-Communications (UEC), Tokyo, Japan}
\affil[2]{Institute for Advanced Science, The University of Electro-Communications (UEC), Tokyo, Japan}
\affil[*]{Correspondence: c2141016@edu.cc.uec.ac.jp}

\date{}% leave empty for no date
\maketitle

\abstract{
We develop a hybrid framework for quantum parameter estimation in the presence of nuisance parameters. In this Bayes-point scheme, the parameters of interest are treated as fixed non-random parameters while nuisance parameters are integrated out with respect to a prior (random parameters). Within this setting, we introduce the hybrid partial quantum Fisher information matrix (hpQFIM), defined by prior-averaging the nuisance block of the QFIM and taking a Schur complement, and derive a corresponding Cram\'er-Rao-type lower bound on the hybrid risk. We establish structural properties of the hpQFIM, including inequalities that bracket it between computationally tractable surrogates, as well as limiting behaviors under extreme priors. Operationally, the hybrid approach improves over pure point estimation since the optimal measurement for the parameters of interest depends only on the prior distribution of the nuisance, rather than on its unknown value. We illustrate the framework with analytically solvable qubit models and numerical examples, clarifying how partial prior information on nuisance variables can be systematically exploited in quantum metrology.
}

\keywords{quantum parameter estimation, nuisance parameters, Bayes-Point estimation, quantum Fisher information, quantum Cram\'er-Rao bound}

% ---------- ABSTRACT & KEYWORDS ----------

%%%%%%%%%%%%%%%%%%%%%%%%%%%%%%%%%%%%%%%%%%

%%%%%%%%%%%%%%%%%%%%%%%%%%%%%%%%%%%%%%%%%%
%\large % remove this after edit

% ---------- INTRODUCTION ---------- 
\section{Introduction}
Quantum metrology and quantum sensing have matured into rigorous frameworks for quantum-limited precision measurement, with rapid theoretical and experimental progress in recent years \cite{degen17,giovannetti2011,pezze2018quantum,Paris2009,demkowicz2012elusive,ying2022critical}.
In many such tasks, the parameter vector naturally separates into parameters of interest, which encode the physical quantity we ultimately care about, and nuisance parameters, which affect the data but are not themselves the target \cite{suzuki2020nuisance,suzuki2020quantum}. Typical nuisances include optical loss and detector inefficiency in interferometry \cite{demkowicz2015quantum}, unknown phase or polarization offsets due to misalignment, dephasing and amplitude-damping rates in spectroscopy \cite{degen17}, background counts in imaging \cite{moreau2019imaging}, or slow drifts in local oscillators for frequency standards \cite{santarelli2002frequency}. Treating interest and nuisance on equal footing can blur the operational goal and can also reduce statistical efficiency \cite{suzuki2020quantum}: measurement settings that are ideal for learning the nuisance may be suboptimal for the scientific quantity of interest \cite{ragy2016compatibility}. 

Quantum estimation provides the decision-theoretic backbone for metrology and sensing by linking experimental design (choice of measurement) to achievable precision limits. Quantum parameter estimation admits both frequentist and Bayesian formulations \cite{barndorff2000fisher,li2018frequentist}. In point (frequentist) formulations, locally optimal measurements often depend on the unknown true parameter; in multi-parameter models, incompatibility between observables can prevent simultaneous attainment of single-parameter limits \cite{braunstein1994statistical}. In fully Bayesian formulations, performance is optimized on average with respect to a prior, which improves robustness and ease of implementation but may reduce local efficiency when the prior is diffuse or misspecified \cite{gill1995applications}. In practice-atomic clocks \cite{ludlow2015optical}, magnetometry \cite{budker2007optical}, optical phase tracking \cite{yonezawa2012quantum}, nanoscale imaging \cite{barry2020sensitivity}. We often have partial prior information about nuisance parameters from routine characterization, while the scientific parameters of interest still demand local, high-resolution treatment \cite{suzuki2020quantum}. This operational asymmetry motivates a hybrid approach.

\subsection{Contributions of this paper}
\begin{enumerate}
  \item \textbf{Framework and risk.} We formalize a hybrid estimation framework that treats parameters of interest as fixed non-random parameters while incorporating nuisance parameters through a prior distribution, i.e., random parameters obeying the distribution. We introduce a hybrid mean squared error (MSE) and hybrid risk as the objective to minimize. (Definition~\ref{def:hybrid-risk}).
  \item \textbf{Hybrid CR-type lower bound.} We prove a Cram\'er-Rao-type (CR-type) inequality in the hybrid setting, identifying the hybrid partial quantum Fisher information matrix (hpQFIM) as a fundamental lower bound on the hybrid risk for the interest parameters under admissible measurements and estimators (Theorem~\ref{thm:qhybridCR}).
  \item \textbf{Two-sided approximations and ordering relations.} We establish computable upper and lower approximations for the hpQFIM (Theorem~\ref{thm:hybridlbub}).
\end{enumerate}

\subsection{Short summary of point estimation and Bayesian estimation} 

\subsubsection{Point estimation in quantum models}\label{subsec:intro-point-nomath}
In point estimation, one fixes an unknown parameter value and seeks measurements and estimators that are locally efficient around that point. Classical CR-type guarantees relate the achievable MSE to information carried by the measurement outcomes, and in quantum settings the choice of measurement becomes part of the optimization itself \cite{helstrom1976quantum,hayashi-book}. In multi-parameter models, jointly optimal measurements can be hindered by incompatibility among observables \cite{conlon2023multiparameter}; Holevo-type criteria capture the best trade-offs permitted by quantum mechanics \cite{holevobook,hayashi2008asymptotic}. A practical limitation is that locally optimal measurements typically depend on the unknown parameter, so adaptive or two-stage strategies are often used to first localize and then refine \cite{gill2000,hayashi2008asymptotic,fujiwara06consistency}.

\subsubsection{Bayesian estimation and prior-averaged optimality}\label{subsec:intro-bayes-nomath}
Bayesian estimation evaluates performance on average with respect to a prior over parameters. The optimal measurement-estimator pair minimizes the Bayes risk defined by this prior, and fundamental lower bounds- such as van Trees-type inequalities and their quantum analogues- link Bayes risk to prior-averaged information quantities \cite{van2004detection,gill1995applications}. In quantum settings, several quantum versions of these classical bounds have been proposed \cite{personick1971application,rubio2019quantum,demkowicz2020multi}. 
Recent Bayesian logarithmic-derivative (LD)-type bounds provide convenient and often tighter computable benchmarks in finite-copy regimes \cite{zhang2024bayesian}. In practice, the optimal Bayesian measurement depends on the prior rather than the unknown true value; this reduces design complexity when only distributional knowledge is available or when adaptive localization is expensive.

\subsubsection{Motivation for a hybrid framework}
Point estimation excels at local precision but can require rapid localization and may face incompatibility in the multi-parameter regime. Full Bayesian estimation is robust and measurement-friendly, yet local sharpness can be diluted under diffuse or misspecified priors. Many metrological scenarios lie between these extremes: we often possess actionable prior information about nuisance components while still aiming for the best local performance on the parameters of interest. This motivates a hybrid approach that uses prior averaging for nuisance parameters to gain robustness and implementability, while preserving point-wise efficiency for the parameters of interest. This perspective resonates with the hybrid CR lower bound in classical signal processing~\cite{messer2006hybrid,fortunati2017performance}. In this work we extend these ideas to quantum estimation; formal definitions and bounds shall be presented in the paper.

The remainder of this paper is organized as follows. 
In Section~2, we formulate the hybrid estimation framework, define the hpQFIM, and derive the associated hybrid CR-type lower bound together with computable inequalities. 
In Section~3, we present numerical case studies on noisy qubit models to illustrate the behavior of the proposed hybrid bound under different nuisance structures. 
In Section~4, we analyze an analytically solvable qubit example where directional parameters are estimated in the presence of a radial nuisance, highlighting how the hybrid formulation connects to the state. 
Section~5 concludes the paper and discusses open directions.
Detailed proofs of the main theorems are provided in Appendices~A and~B.

% ---------- HYBRID FORMULATION ---------- 
\section{Hybrid Framework} \label{sec:hybrid_framework} 
In this section, we provide the hybrid framework and two main theorems. The proof of the theorem is written in appendix.
\subsection{Setting and notation}
The parametric model is $ \left\{ {\rho_{\theta_I,\theta_N}} \mid \theta_I \in \Theta_I \subset \mathbb{R}^{d_I}, ~ \theta_N \in \Theta_N \subset \mathbb{R}^{d_N}  \right\}$ with the state \(\rho_{\theta_I, \theta_N} \) on a finite dimensional Hilbert space \( \mathcal H\). Let the parameter vector be partitioned as \(\theta = (\theta_I,\theta_N)\) such that
\begin{align}
    \theta=(\theta_1,\theta_2,\cdots,\theta_{d_I},\theta_{d_I+1},\cdots,\theta_{d_I + d_N}) \in 
    \Theta_I\times\Theta_N\subset \bbR^{d_I + d_N},
\end{align}
where \(d_I\) and \(d_N\) represents the numbers of parameter of interest and nuisance parameters. Thus we use the \(\theta_I\) and \(\theta_N\) to present the vector of corresponding parameters. The positive operator-valued measure (POVM) is
$\Pi=\{ \Pi_x \mid x\in \mathcal{X} \}$ with outcome
$x\sim p(x \,|\, \theta_I,\theta_N)=\tr[\rho_{\theta_I,\theta_N}\Pi_x]$
such that $\Pi_x \succeq 0$ and
\begin{align}
\int_{\mathcal{X}} \Pi_x\,\mathrm dx=I, 
\end{align}
where the integral is taken over $\mathcal{X}$. In the purely discrete case,  the integral is understood as a sum,
i.e.,\ $\sum_{x\in\mathcal{X}}\Pi_x=I$. We use the integral notation for uniformity and all statements apply equally to discrete outcome spaces.
The estimator $\hat\theta_I: \calX \mapsto \Theta_I $ be locally unbiased at \(\theta_I\) for any \(\theta_N\) which means
\begin{equation}\label{eq:localu}
    \begin{split}
    \mathbb{E}_{x|\theta_I,\theta_N} [\hat\theta_{I,i}(x)] &= \theta_{I,i} ~\text{  for all }i, \\
    \frac{\partial}{\partial \theta_{I,i}} \mathbb{E}_{x|\theta_I,\theta_N} [\hat\theta_{I,j}(x)] &= \delta_{i,j} ~\text{   for all } i,j,    
    \end{split}
\end{equation}
where \(\theta_{I,i}\) denotes the \(i\)-th parameter of interest and the expectation is 
\begin{align}
    \mathbb{E}_{x|\theta_I,\theta_N}[f(x)] &:= \int_\mathcal{X}  f(x)  p(x | \theta_I,\theta_N) dx.
\end{align}
In this research, the random variable is denoted in small \(x\).

\begin{Definition}[Hybrid MSE]\label{def:hybrid-risk} Given a prior density $\pi$ on $\Theta_N$, the (matrix-valued) hybrid MSE is 
\begin{align} 
V_{\theta_I,\pi}(\Pi,\hat\theta_I) :=\mathbb{E}_{\pi} \mathbb{E}_{x|\theta_I,\theta_N} \big[(\hat\theta_I(x)-\theta_I)(\hat\theta_I(x)-\theta_I)^{\top} \big] \in \bbR^{d_I \times d_I} ,
\end{align}
where the expectations are 
\begin{align}
    \mathbb{E}_{x|\theta_I,\theta_N}[f(x)] &:= \int_\mathcal{X}  f(x)  p(x | \theta_I,\theta_N) dx,\\
    \mathbb{E}_{\pi}[g(\theta_N)] &:= \int_{\Theta_N}  g(\theta_N) ~\pi(\theta_N) \, d\theta_N.
\end{align}
\end{Definition}

\paragraph{The main objective}
The original main aim of quantum estimation is to finding the following quantity.
\begin{align}
    &\min_{\Pi,\hat\theta_I} V_{\theta_I,\pi}(\Pi,\hat\theta_I), \\
    &\text{s.t. } \hat \theta_I \text{ is locally unbiased at } \theta_I \text{ for any } \theta_N .
\end{align}
This minimization is taken over \( (\Pi,\hat\theta_I )\) which is called quantum decision. However, this minimization is not always possible because it is a matrix. Thus the main objective is to minimize the scalar hybrid risk for a weight matrix, $W \succ 0$ on parameters of interest  $\mathbb{R}^{d_I}$. The hybrid risk is defined as follows.
\begin{Definition}[Hybrid risk] 
\begin{align}
\mathcal{R}_{\theta_I,\pi} (\Pi,\hat\theta_I\mid W)
:=\Tr\big[W \cdot V_{\theta_I,\pi}(\Pi,\hat\theta_I)\big].
\end{align} \end{Definition}
%In this research we abbreviate \(\pi_N\) as \(\pi\) since only the prior on nuisance parameters is involved. 
The prior \(\pi(\theta_N)\) is assumed to be twice continuously differentiable in \(\theta_N\) and decays sufficiently fast at the boundary (or at infinity) so that all boundary terms vanish in first-order integration by parts.

\subsection{Quantum information blocks and the hpQFIM}
Let $J$ denote the symmetric logarithmic derivative (SLD) quantum Fisher information matrix (QFIM) and write its block form
\begin{align}
J(\theta_I,\theta_N)
=\begin{pmatrix}
J_{II}(\theta_I,\theta_N) &
J_{IN}(\theta_I,\theta_N) \\
J_{NI}(\theta_I,\theta_N) &
J_{NN}(\theta_I,\theta_N)
\end{pmatrix} \in \bbR^{ (d_I + d_N) \times (d_I + d_N) } .
\end{align}
In this research, SLD QFIM is involved since this is the most widely used quantum score operator due to its symmetric and Hermitian properties \cite{helstrom1969quantum}. However, the result is free to extended in right logarithmic derivative (RLD) QFIM \cite{yuan1970optimal}. 
In what follows, we omit the explicit \((\theta_I,\theta_N)\) dependence in QFIM blocks when clear from context.
We use the partial QFIM for the parameters of interest, defined as the Schur complement of the nuisance block,
\begin{align}\label{eq:pqfim}
  J_{I|N} \;:=\; J_{II} \;-\; J_{IN}\, J_{NN}^{-1}\, J_{NI},
\end{align}
which captures the information on \(\theta_I\) after optimally projecting out the effect of the nuisance
parameters \(\theta_N\) and is the appropriate quantity entering the CR-type bound for \(\theta_I\) \cite{kumon1984estimation,amari1988estimation}.

Let $J_\pi$ be the classical Fisher information matrix of the prior $\pi$ on $\theta_N$ of which the \(a,b\)-th entry is
\begin{align}
    J_{\pi,ab} := \int_{\Theta_N} \left( \frac{\partial}{\partial \theta_{N,a}} \log \pi(\theta_N) \right)    \left( \frac{\partial}{\partial \theta_{N,b}} \log \pi(\theta_N) \right)  \pi(\theta_N)\, d\theta_N.
\end{align}
We reserve the indices \(i,j\) for components of \(\theta_I\) (written \(\theta_{I,i}, \theta_{I,j}\)) and the indices \(a,b\) for components of \(\theta_N\) (written \(\theta_{N,a}, \theta_{N,b}\)). We define the \emph{hpQFIM} by prior-averaging the blocks over $\pi$ and taking the Schur complement:

\begin{Definition}[Hybrid Partial Quantum Fisher Information Matrix (hpQFIM)]\label{def:hybrid partial QFIM}
\begin{align}
J^{(\pi)}_{I\mid N}(\theta_I)
:=\ \mathbb{E}_{\pi}\big[J_{II}\big]
\;-\;
\mathbb{E}_{\pi}\big[J_{IN}\big]\,
\Big(\,\mathbb{E}_{\pi}\big[J_{NN}\big]+J_\pi\Big)^{-1}
\mathbb{E}_{\pi}\big[J_{NI}\big].
\end{align}
\end{Definition}

\noindent
Intuitively, $J^{(\pi)}_{I\mid N}$ aggregates (i.e., prior-averages out) nuisance information and quantifies how much information for $\theta_I$ remains after accounting for the prior on $\theta_N$.

\subsection{Hybrid CR-type lower bound}

We state the main result of the paper, which is proven by using the covariance inequality. 
\begin{Theorem}[Quantum hybrid CR-type lower bound]\label{thm:qhybridCR}
For any POVM $\Pi$ and any locally unbiased estimator for the parameters of interest $\hat\theta_I$, the matrix inequality
\begin{align}
V_{\theta_I,\pi}(\Pi,\hat\theta_I)\ \succeq\ \big(J^{(\pi)}_{I\mid N}(\theta_I)\big)^{-1},
\end{align}
holds. Hence, for any weight matrix $W\succ 0$, the hybrid risk is bounded as
\begin{align}
\mathcal{R}_{\theta_I,\pi}(\Pi,\hat\theta_I \mid W) \ \ge\ \Tr\Big[\,W\,\big(J^{(\pi)}_{I\mid N}(\theta_I)\big)^{-1}\Big].
\end{align}
\end{Theorem}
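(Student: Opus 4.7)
The plan is to reduce the quantum hybrid bound to its classical analogue via a matrix covariance inequality built from combined data-and-prior scores, and then close the loop by majorizing the classical Fisher information by its SLD counterpart together with a Schur-complement monotonicity argument.

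First, for the POVM $\Pi$, introduce the classical scores $s_{I,i}(x;\theta):=\partial_{\theta_{I,i}}\log p(x\mid\theta_I,\theta_N)$ and $s_{N,a}(x;\theta):=\partial_{\theta_{N,a}}\log p(x\mid\theta_I,\theta_N)$, together with the prior score $t_a(\theta_N):=\partial_{\theta_{N,a}}\log\pi(\theta_N)$; assemble the combined score $S(x,\theta_N):=(s_I;\,s_N+t)\in\mathbb{R}^{d_I+d_N}$ under the joint law $p(x\mid\theta)\pi(\theta_N)$. Local unbiasedness \eqref{eq:localu} applied pointwise in $\theta_N$ and then $\pi$-averaged gives $\mathbb{E}_{\pi}\mathbb{E}_{x\mid\theta}[(\hat\theta_I-\theta_I)s_I^{\top}]=I_{d_I}$; because $\mathbb{E}_{x\mid\theta}[\hat\theta_I]=\theta_I$ is constant in $\theta_N$, the cross term against $s_N$ vanishes, while the term against $t$ vanishes by unbiasedness since $t$ depends only on $\theta_N$. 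Standard zero-mean properties of $s_I,s_N$ combined with integration by parts in $\pi$ (justified by the decay assumption) then give $\mathrm{Cov}(S)=\Sigma^{c}$ with blocks $\mathbb{E}_{\pi}[J^{c}_{II}],\ \mathbb{E}_{\pi}[J^{c}_{IN}],\ \mathbb{E}_{\pi}[J^{c}_{NI}],\ \mathbb{E}_{\pi}[J^{c}_{NN}]+J_\pi$, where $J^{c}(\Pi;\theta)$ is the classical Fisher information of $p(\cdot\mid\theta)$.

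The matrix covariance inequality with cross matrix $C=(I_{d_I},0)$ then yields $V_{\theta_I,\pi}(\Pi,\hat\theta_I)\succeq C(\Sigma^{c})^{-1}C^{\top}=[\mathrm{Sch}_{NN}(\Sigma^{c})]^{-1}$, i.e.\ the inverse classical hybrid partial Fisher information. To promote this to a quantum bound, apply the SLD Cram\'er--Rao inequality $J^{c}(\Pi;\theta)\preceq J(\theta)$ pointwise; averaging in $\pi$ and adding $J_\pi$ to the nuisance block preserves the Loewner order, so $\Sigma^{c}\preceq\Sigma^{Q}$, where $\Sigma^{Q}$ is obtained by replacing each $J^{c}$ block by the analogous SLD block. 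Monotonicity of the Schur complement with respect to a positive-definite lower-right block then gives $\mathrm{Sch}_{NN}(\Sigma^{c})\preceq\mathrm{Sch}_{NN}(\Sigma^{Q})=J^{(\pi)}_{I\mid N}(\theta_I)$, and hence $[\mathrm{Sch}_{NN}(\Sigma^{c})]^{-1}\succeq\bigl(J^{(\pi)}_{I\mid N}(\theta_I)\bigr)^{-1}$. Combining with the previous inequality yields the stated matrix bound, and the scalar weighted form follows from $\mathrm{Tr}[W(A-B)]\ge 0$ whenever $A\succeq B$ and $W\succeq 0$.

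The main obstacles are regularity and order-preservation bookkeeping rather than depth: justifying the exchanges of differentiation and integration (both in $x$ for local unbiasedness, and in $\theta_N$ for the integration by parts on $\pi$, where the stated smoothness and decay assumptions on $\pi$ are precisely what is needed to kill boundary terms), ensuring invertibility of $\mathbb{E}_{\pi}[J_{NN}]+J_\pi$ (for which positive-definiteness of either $J_\pi$ or $J_{NN}$ $\pi$-a.s.\ suffices), and invoking the Loewner-order monotonicity of the Schur complement---each routine but worth an explicit pointer in the appendix.
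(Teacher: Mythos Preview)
Your proposal is correct and follows essentially the same route as the paper's proof in Appendix~\ref{app:Quanhybridcrlowerbound}: build the prior-augmented score $(s_I;\,s_N+t)$, apply the covariance (Cauchy--Schwarz) inequality to obtain the classical hybrid bound, and then pass to the quantum bound via the pointwise Braunstein--Caves inequality $J^{c}(\Pi;\theta)\preceq J(\theta)$ together with Schur-complement/block-inverse monotonicity. The only cosmetic difference is that the paper pads the error vector with a dummy $\hat\theta_N-\theta_N$ so that the cross matrix $T$ is square and equal to the identity, whereas you work directly with $f=\hat\theta_I-\theta_I$ and the rectangular cross matrix $C=(I_{d_I},0)$; your version is slightly leaner but equivalent.
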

\begin{proof}[Proof]
Details in appendix \ref{app:Quanhybridcrlowerbound}.
\end{proof}

\paragraph{Motivation for Two-sided approximations.}
The hpQFIM $J^{(\pi)}_{I|N}$ is the central information quantity in our framework, but it is not always the most convenient object to evaluate or compare across models and priors. In quantum point estimation with nuisance parameters, related results bound the Schur-complement-type information between an averaging-after-inverse quantity and a simpler interest-block average. Such bounds serve two purposes: (i) they provide computationally convenient approximations when the exact partial information is hard to obtain, and (ii) they characterize how much precision can be gained or lost due to nuisance parameter and prior uncertainty. Motivated by this practice, we establish analogous two-sided approximations for the hybrid setting.

\begin{Theorem}[Lower and upper approximations for the hpQFIM]\label{thm:hybridlbub}
For the hpQFIM $J^{(\pi)}_{I|N}$, the following inequalities hold:
\begin{align}
\mathbb{E}_\pi[J_{II}(\theta_I,\theta_N)]
\ \succeq\ 
J^{(\pi)}_{I|N}(\theta_I)
\ \succeq\
\mathbb{E}_\pi \left[J_{I|N}(\theta_I,\theta_N)\right] .
\end{align}
\end{Theorem}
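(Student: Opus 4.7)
The plan is to treat the two inequalities separately: the upper one is essentially immediate from the Schur-complement structure of the hpQFIM, while the lower one reduces to a standard operator-convexity inequality plus a monotonicity step that absorbs the prior Fisher term $J_\pi$.

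For the upper bound I would compute the difference directly from Definition~\ref{def:hybrid partial QFIM}:
\[
\mathbb{E}_\pi[J_{II}] - J^{(\pi)}_{I|N}(\theta_I) = \mathbb{E}_\pi[J_{IN}]\,\bigl(\mathbb{E}_\pi[J_{NN}] + J_\pi\bigr)^{-1}\,\mathbb{E}_\pi[J_{NI}].
\]
Each $J_{NN}(\theta_I,\theta_N)$ is PSD as a QFIM block, $J_\pi$ is PSD as a classical Fisher matrix, and under the standing regularity their prior-averaged sum is PD. The right-hand side therefore has the form $M^\top K^{-1} M$ with $K \succ 0$, which is PSD, giving $\mathbb{E}_\pi[J_{II}] \succeq J^{(\pi)}_{I|N}$.

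For the lower bound, after cancelling the common $\mathbb{E}_\pi[J_{II}]$ term I would reduce the claim to
\[
\mathbb{E}_\pi\bigl[J_{IN}\,J_{NN}^{-1}\,J_{NI}\bigr] \;\succeq\; \mathbb{E}_\pi[J_{IN}]\,\bigl(\mathbb{E}_\pi[J_{NN}] + J_\pi\bigr)^{-1}\,\mathbb{E}_\pi[J_{NI}],
\]
and prove it in two steps. First, since $J_\pi \succeq 0$, operator antitonicity of the inverse on the PD cone gives $(\mathbb{E}_\pi[J_{NN}] + J_\pi)^{-1} \preceq \mathbb{E}_\pi[J_{NN}]^{-1}$; conjugating by $\mathbb{E}_\pi[J_{NI}]$ on both sides preserves this, so it suffices to prove the inequality with $J_\pi$ dropped. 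Second, I would invoke Jensen's inequality for the jointly operator convex map $(X,Y) \mapsto X^\top Y^{-1} X$ on pairs with $Y \succ 0$, a classical result of Ando/Lieb--Ruskai. Applied with $X = J_{NI}(\theta_I,\theta_N)$, $Y = J_{NN}(\theta_I,\theta_N)$ and expectation under $\pi$, it yields $\mathbb{E}_\pi[J_{IN}]\,\mathbb{E}_\pi[J_{NN}]^{-1}\,\mathbb{E}_\pi[J_{NI}] \preceq \mathbb{E}_\pi[J_{IN}\,J_{NN}^{-1}\,J_{NI}]$, closing the chain.

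The main obstacle is technical rather than conceptual: one must ensure invertibility of $J_{NN}(\theta_I,\theta_N)$ for $\pi$-almost every $\theta_N$ so that $J_{I|N}(\theta_I,\theta_N)$ is defined, and invertibility of $\mathbb{E}_\pi[J_{NN}] + J_\pi$ so that the hpQFIM itself is defined. Under the smoothness and non-degeneracy assumed in Section~\ref{sec:hybrid_framework} both hold; otherwise one replaces inverses with Moore--Penrose pseudoinverses under the usual range conditions, and the joint operator convexity of $X^\top Y^{-1} X$ continues to apply on the common support of the relevant matrices.
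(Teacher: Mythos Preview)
Your proof is correct and follows essentially the same route as the paper: the upper bound is immediate from the PSD form $M^\top K^{-1}M$, and the lower bound rests on the inequality $\mathbb{E}_\pi[J_{IN}J_{NN}^{-1}J_{NI}]\succeq \mathbb{E}_\pi[J_{IN}]\,(\mathbb{E}_\pi[J_{NN}])^{-1}\mathbb{E}_\pi[J_{NI}]$, which the paper derives via a block-matrix Schur-complement argument and you obtain by invoking the Lieb--Ruskai/Ando joint operator convexity of $(X,Y)\mapsto X^\top Y^{-1}X$---these are equivalent formulations. Your explicit monotonicity step $(\mathbb{E}_\pi[J_{NN}]+J_\pi)^{-1}\preceq(\mathbb{E}_\pi[J_{NN}])^{-1}$ to absorb $J_\pi$ is in fact a detail the paper's appendix silently skips.
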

\begin{proof}[Proof]
Details in appendix \ref{app:lowerandupper}.
\end{proof}

We have two remarks on the theorem. 
\begin{itemize}
\item Computational surrogates. For our model, the partial information $J_{I|N}$ has a closed form at each nuisance sample, so the right bound
$\mathbb{E}_\pi[J_{I|N}(\theta_I,\theta_N) ]$ is evaluated by \emph{averaging closed-form matrices} and then inverting once.
The left bound $\mathbb{E}_\pi[J_{II}(\theta_I,\theta_N)]$ is even simpler: average once and invert once.
\item 
By contrast, the hybrid quantity requires the term $\big(\mathbb{E}_{\pi}[J_{NN}]+J_\pi\big)^{-1}$ does not admit a closed form in general (it depends on the prior $\pi$). Consequently, it typically has to be computed numerically and repeatedly (e.g., across prior hyperparameters), making this middle term the computational bottleneck. This is precisely why the left/right bounds are useful: they bracket $J^{(\pi)}_{I|N}$ while avoiding repeated inner inversions.
\end{itemize}

% ---------- Result ----------
\section{Examples (noisy qubit metrology)}\label{sec:results}

This section complements the hybrid framework in Section~\ref{sec:hybrid_framework} by reporting numerical comparisons on qubit models in Bloch sphere parameters. An analytically solvable model will be presented in the next section. 

\subsection{Numerical comparison on qubit models}\label{subsec:numerical-qubit}
%\paragraph{Setup.}
We consider single-qubit models with two parameters $\theta=(\theta_I,\theta_N)$ and priors $\pi$ on nuisance.
For each model we report:
(i) the lower bound for hybrid risk $\Tr \big[W\,(J^{(\pi)}_{I\mid N}(\theta_I))^{-1}\big]$ (Definition~\ref{def:hybrid partial QFIM}); 
(ii) the lower bound and upper bound for the hpQFIM (Theorem~\ref{thm:hybridlbub})
Unless stated otherwise, weights use $W=I$.

%\paragraph{Models.}
The model is presented in Bloch sphere parameter as 
\begin{align}
    \rho_{\theta_I,\theta_N} = \frac{1}{2} \left( I + \mathbf{s}(\theta_I,\theta_N \mid r,\phi) \cdot \bm{\sigma}  \right),
\end{align}
where \(\bm{\sigma}=(\sigma_x,\sigma_y,\sigma_z)\) is the vector of Pauli matrices. We study three representative cases:
\begin{itemize}
  \item \textbf{(1) Phase with extra rotation:} interest $(\theta_I)$, nuisance $(\theta_N)$; we sweep prior concentration to illustrate the gap predicted.
  \begin{equation}
      \mathbf{s} (\theta_I,\theta_N \mid r,\phi) = (r \sin \phi \cos (\theta_I + \theta_N) , r \sin \phi \sin (\theta_I +\theta_N), r \cos \phi),
  \end{equation}
  with parameters \(\theta_I \in [0, 2\pi),~ \theta_N \in [0, 2\pi)\) and fixed values \( r\in(0,1),~\phi\in [0,2\pi).\) This model is impossible to estimate in point estimation but is available in this hybrid framework.
  The result is illustrated in Figure~\ref{fig:extrarotation}.

  \item \textbf{(2) Additional-sine model:} interest $(\theta_I)$, nuisance $(\theta_N)$ with cross-coupling; priors on $\theta_N$ with varying concentration.
  \begin{equation}
      \mathbf{s} (\theta_I,\theta_N \mid r,\phi) = (r \sin \phi \cos \theta_I, r \sin \phi \sin (\theta_I + \theta_N), r \cos \phi),
  \end{equation}
  with parameters \(\theta_I \in [0, 2\pi),~ \theta_N \in [0, 2\pi) \) and fixed values \( r\in(0,1),~\phi\in [0,2\pi)\) with a constraint \(r^2 ( \sin^2 \phi +1) \leq 1\).
  This model can be interpreted as a toy abstraction of situations where an additional phase affects only one channel (e.g., a single arm of an interferometer or one polarization component), thereby breaking the usual rotational symmetry.
  The result is illustrated in Figure~\ref{fig:additionalsine}.
  
 \item \textbf{(3) Anisotropic shrinking:} dissipative channel with axis-dependent contraction; we isolate interest while averaging nuisance shrinkage.
  \begin{equation}
      \mathbf{s} (\theta_I,  \theta_N \mid r,\phi) = (r \sin \phi \cos \theta_I,  r \, \theta_N \sin \phi  \sin \theta_I, r \cos \phi  ),
  \end{equation}
  with parameters \(\theta_I \in [0, 2\pi),~ \theta_N \in (0, 1] \) and fixed values \( r\in(0,1),~ \phi\in [0,2\pi)\).
  The result is illustrated in  Figure~\ref{fig:anisotropic}.
\end{itemize}

\paragraph{Steps for numerical analysis.}
We repeat the following steps to analyze each model. 
The results will be given in the next subsections. 
\begin{enumerate}
  \item Compute the QFIM $J(\theta_I,\theta_N)$ and the  partial QFIM $J_{I\mid N}(\theta_I,\theta_N)$ via Equation~\ref{eq:pqfim}.
  \item Let \(\pi\) be uniform distribution in the domain of nuisance parameter (non-informative prior), compute the analytical form of hpQFIM (Definition~\ref{def:hybrid partial QFIM}) and its corresponding lower and upper approximations (Theorem~\ref{thm:hybridlbub}.) One may notice that in Theorem~\ref{thm:qhybridCR}, the lower bound of hybrid risk is the inverse of the hpQFIM. Thus, in this section, we demonstrate the lower and upper approximations in the inverse form as
  \begin{align}
\left(\mathbb{E}_\pi \left[J_{I|N}(\theta_I,\theta_N)\right] \right)^{-1}
\ \succeq\ 
\big(J^{(\pi)}_{I|N}(\theta_I)\big)^{-1}
\ \succeq\ \big(\mathbb{E}_\pi[J_{II}(\theta_I,\theta_N)]\big)^{-1}.
\end{align}
  \item Derive the values for former quantities for grid points (fifty points in each figure) of parameter of interest in its range and truncate the point on the boundary.
\end{enumerate}

%\subsection{Results}
\subsection{Extra rotation model}
For the model
\[
\mathbf{s}(\theta_I,\theta_N\mid r,\phi)
=\big(r\sin\phi\cos(\theta_I+\theta_N),\ r\sin\phi\sin(\theta_I+\theta_N),\ r\cos\phi\big),
\]
the score directions satisfy $\partial_{\theta_I}\mathbf s=\partial_{\theta_N}\mathbf s$. Hence the single-qubit QFIM blocks are constants
\[
J_{II}=J_{NN}=J_{IN}=r^2\sin^2\phi,
\]
independent of the parameters $(\theta_I,\theta_N)$, and the likelihood-only Schur complement is identically zero:
$J_{I|N}=J_{II}-J_{IN}^2/J_{NN}=0$. The result is shown in Figure~\ref{fig:extrarotation}.
\begin{figure}[!htbp]
    \centering
    \includegraphics[width=1.0\linewidth]{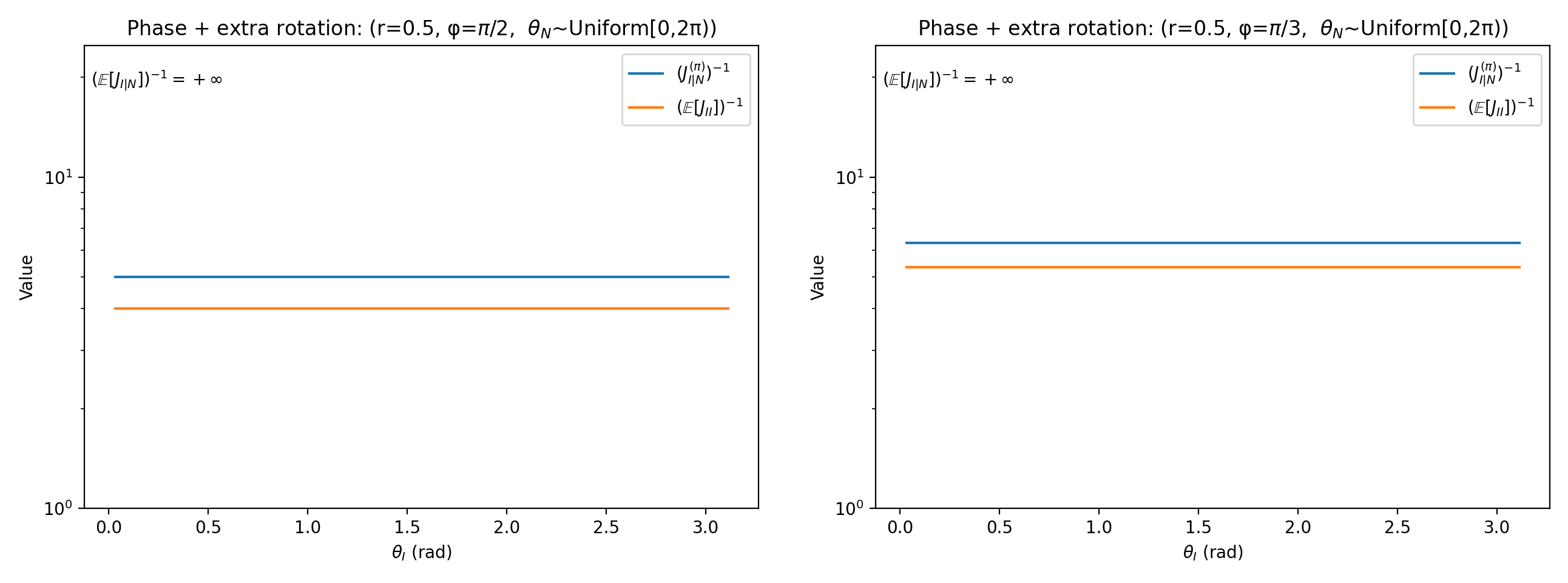}
    \vspace{-3mm}
    \includegraphics[width=1.0\linewidth]{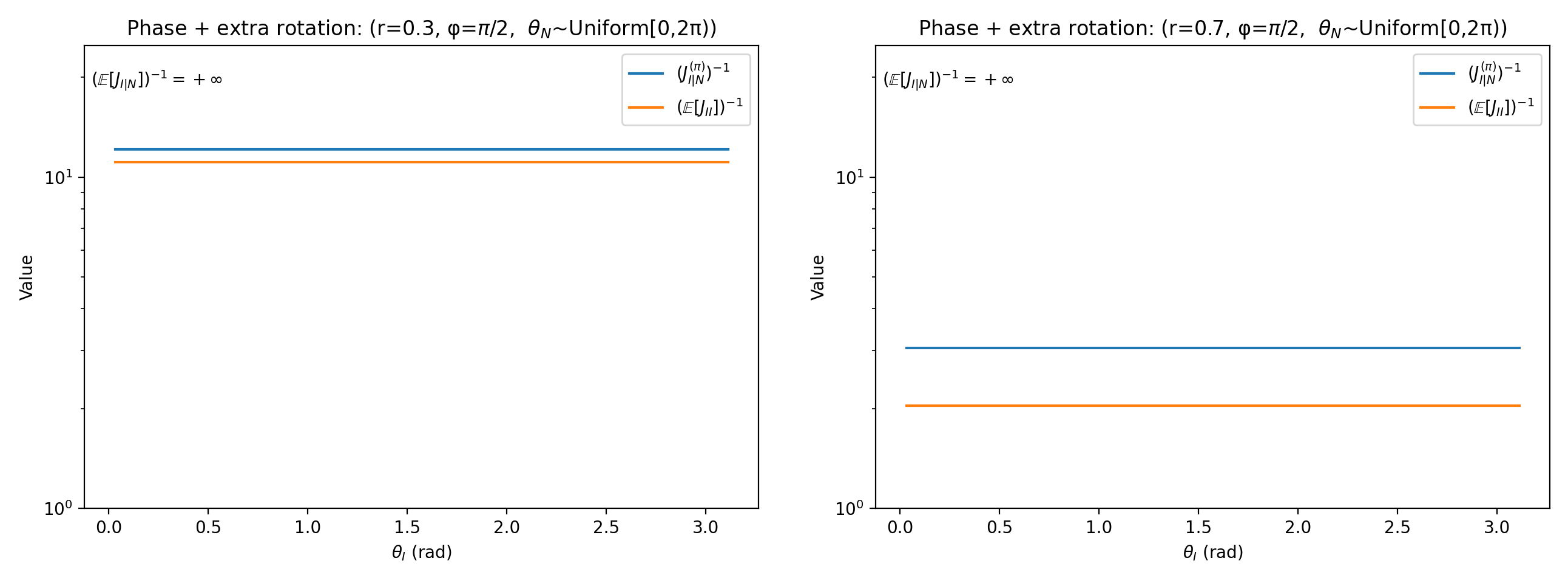}
    \caption{Comparison of the hpQFIM and its two approximations for the model~(phase with extra rotation).  
    The model parameters are set as $(r=0.5,~\phi=\pi/2)$, $(r=0.5,~\phi=\pi/3)$, $(r=0.3,~\phi=\pi/2)$ and
    $(r=0.7,~\phi=\pi/2)$ with uniform $\theta_N \sim U [0,2\pi)$.   
    Blue: $[\mathbb{E}(J_{I|N})]^{-1}$; Orange: $(J^{(\pi)}_{I|N})^{-1}$.}
    \label{fig:extrarotation}
\end{figure}

We now discuss consequences seen in the plots. 
To simplify the notation, denote the three quantities by
\[
U:=\big(\mathbb{E}[J_{I|N}]\big)^{-1},\qquad 
M:=\big(J^{(\pi)}_{I|N}\big)^{-1},\qquad 
L:=\big(\mathbb{E}[J_{II}]\big)^{-1}.
\]
The four panels (for $(r,\phi)\in
\{(0.5,\tfrac\pi2),\ (0.5,\tfrac\pi3),\ (0.3,\tfrac\pi2),\ (0.7,\tfrac\pi2) \}$) in Figure~\ref{fig:extrarotation} exhibit:

\begin{itemize}
  \item \textbf{Flat (constant) curves in $\theta_I$.} Since $J_{II},J_{NN},J_{IN}$ do not depend on angles, both $M$ and $L$ are constant in $\theta_I$. This matches the horizontal lines in all panels.
  \item \textbf{Divergent upper bound.} Because $J_{I|N}\equiv 0$, we have
  $\mathbb{E}[J_{I|N}]=0$ and therefore
  \[
  U=\big(\mathbb{E}[J_{I|N}]\big)^{-1}=+\infty,
  \]
  as indicated by the ``$U=+\infty$'' annotation. 
  \item \textbf{Lower and middle terms.} Averaging yields
  \[
  L=\frac{1}{\mathbb{E}[J_{II}]}=\frac{1}{r^2 \sin^2 \phi},\qquad
  J^{(\pi)}_{I|N}= \frac{r^2 \sin^2 \phi\,J_\pi}{r^2 \sin^2 \phi+J_\pi}
  \ \Longrightarrow\
  M=\frac{r^2 \sin^2+J_\pi}{r^2 \sin^2 \,J_\pi},
  \]
  where $J_\pi>0$ denotes the prior Fisher information for the nuisance. Thus $M>L$ and the hybrid inequality
  $U\ge M\ge L$ holds everywhere.
\end{itemize}

Next, we consider scaling with $(r,\phi)$. From the analytical expressions, we immediately see that only the scale $r^2\sin^2\phi$ matters:
\[
L=\frac{1}{r^2 \sin^2}\quad \text{and}\quad 
M=\frac{r^2 \sin^2+J_\pi}{r^2 \sin^2\,J_\pi}.
\]
Therefore increasing $r$ (with $\phi$ fixed) or increasing $\sin\phi$ (with $r$ fixed) uniformly lowers both constant lines. This is exactly what is observed when comparing
$r=0.3$ vs.\ $0.7$ at $\phi=\pi/2$, and $\phi=\pi/2$ vs.\ $\pi/3$ at $r=0.5$.

To summarize the result of this model, we conclude as follows. 
The extra-rotation coupling makes the score directions for $\theta_I$ and $\theta_N$ collinear, so the likelihood-only Schur complement $J_{I|N}$ vanishes identically and pure likelihood
information on $\theta_I$ is lost. Introducing a nonzero prior concentration $J_\pi$ on the nuisance
regularizes this degeneracy: the hybrid information becomes
$J^{(\pi)}_{I|N}=r^2\sin^2\phi J_\pi/(r^2\sin^2\phi+J_\pi)$, yielding the finite middle curve
$M=(J^{(\pi)}_{I|N})^{-1}=(r^2\sin^2\phi +J_\pi)/(r^2\sin^2\phi J_\pi)$, while the lower bound
$L=(\mathbb{E}[J_{II}])^{-1}=1/r^2\sin^2\phi$ represents the naive baseline. The observed flatness of $M$ and $L$ in $\theta_I$ reflects that, for this model, only the global scale $r^2\sin^2\phi$ and the prior concentration $J_\pi$ determine estimation precision.

\subsection{Additional sine model}
For the model
\[
\mathbf{s}(\theta_I,\theta_N\mid r,\phi)
=\big(r\sin\phi\cos(\theta_I),\ r\sin\phi\sin(\theta_I+\theta_N),\ r\cos\phi\big),
\]
with interest $\theta_I$ and nuisance $\theta_N\sim\mathrm{Unif}(0,1]$. 
The result is shown in Figure~\ref{fig:additionalsine}.
\begin{figure}[!htbp]
    \centering
    \includegraphics[width=1.0\textwidth]{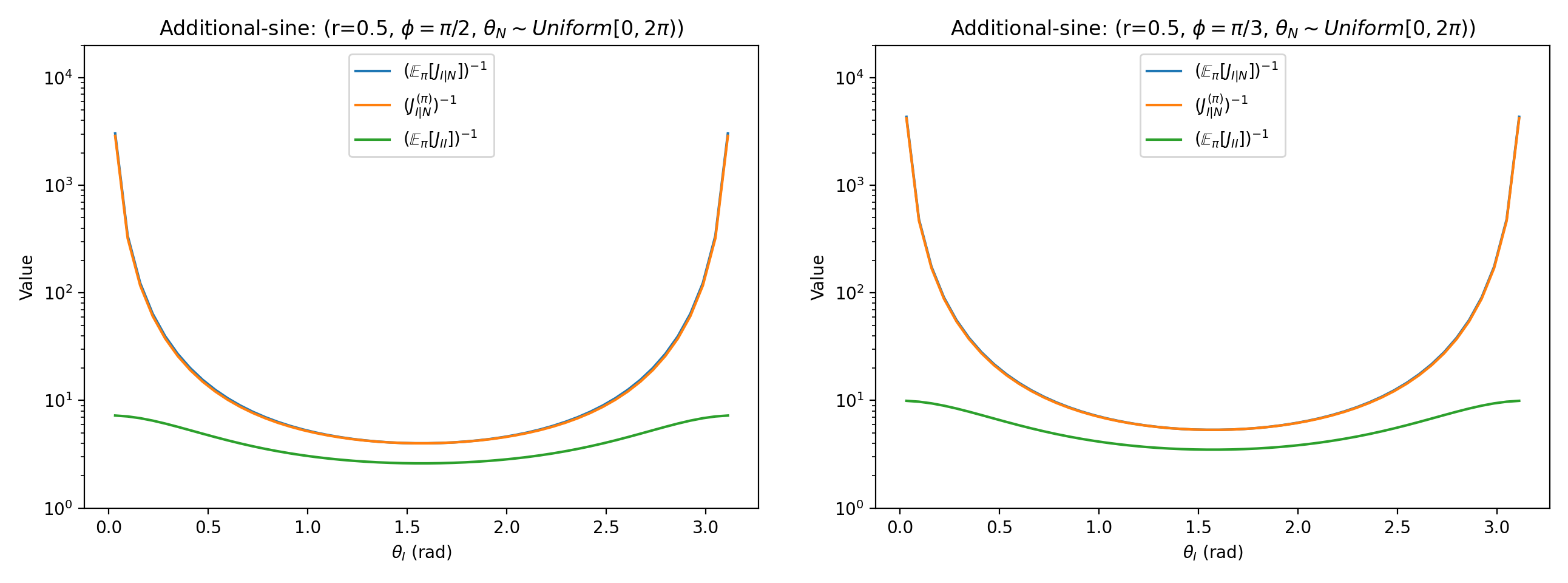}
    \includegraphics[width=1.0\textwidth]{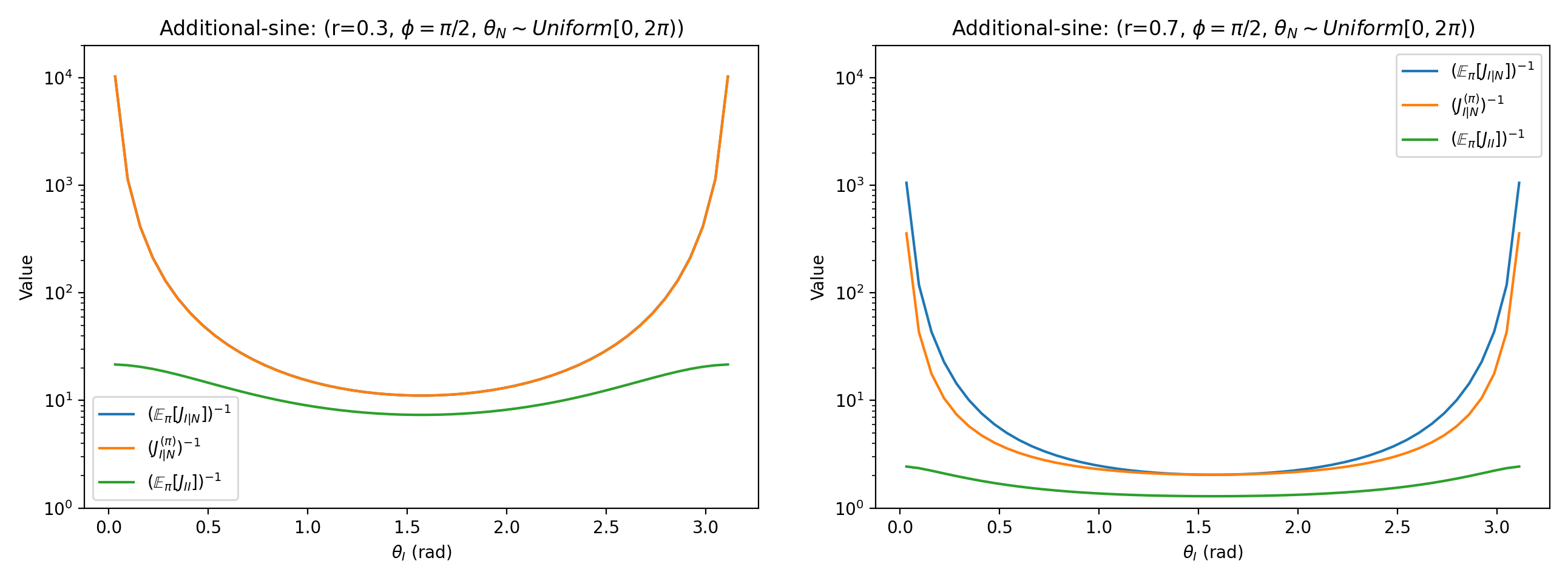}
    \caption{Comparison of the hpQFIM and its two approximations for the model~(phase with extra rotation).  
    The model parameters are set as $(r=0.5,~\phi=\pi/2)$, $(r=0.5,~\phi=\pi/3)$, $(r=0.3,~\phi=\pi/2)$ and $(r=0.7,~\phi=\pi/2)$ with uniform $\theta_N \sim U [0,2\pi)$.   
    Blue: $[\mathbb{E}(J_{I|N})]^{-1}$; Orange: $(J^{(\pi)}_{I|N})^{-1}$; Green: $(\mathbb{E}[J_{II}])^{-1}$.}
    \label{fig:additionalsine}
\end{figure}

We discuss consequences of the plots.
Denote, as before,
\[
U:=\big(\mathbb{E}_\pi[J_{I\mid N}]\big)^{-1},\qquad 
M:=\big(J^{(\pi)}_{I\mid N}\big)^{-1},\qquad 
L:=\big(\mathbb{E}_\pi[J_{II}]\big)^{-1}.
\]

First, we see that across all parameter choices, the ordering $U \ge M \ge L$ holds for every $\theta_I$. 
The curves are symmetric about $\theta_I=\pi/2$ and reach their minima near the center of the interval. 
As $\theta_I \to 0$ or $\pi$, $M$ and $L$ grow rapidly, reflecting the near-colinearity of score directions for $\theta_I$ and $\theta_N$ at the endpoints. 
In the bulk of the interval, $U$ and $M$ are nearly indistinguishable under the uniform prior, showing that averaging largely cancels cross-term fluctuations. 

Next, we analyze the dependence on the model parameters $(r,\phi)$ .
The overall information scale $r^2\sin^2\phi$ governs the depth and position of the curves. 
Increasing $r$ at fixed $\phi$ uniformly lowers all bounds and makes the central valley deeper. 
Reducing $\sin\phi$ at fixed $r$ (e.g.\ $\phi:\ \tfrac{\pi}{2}\to\tfrac{\pi}{3}$) decreases $r^2\sin^2\phi$ and shifts the curves upward while preserving their shapes. 
These monotone trends are consistently observed across the four chosen parameter settings.

We hence summarize the second model as follows. 
The additional-sine model highlights how an asymmetric coupling of the nuisance to the signal
modifies estimation: the nuisance phase enters only the second transverse component,
so the two parameter directions do not affect the state in a rotationally symmetric way. 
With a uniform prior over the nuisance, the lower bound $L=(\mathbb{E}_\pi[J_{II}])^{-1}$ is a proxy in the bulk of $\theta_I$, 
but it becomes optimistic in a narrow neighborhood of the endpoints ($\theta_I\!\to\!0,\pi$), 
where the middle curve $M=(J^{(\pi)}_{I|N})^{-1}$ better reflects the true loss of information. 
Geometrically, this behavior follows from the anisotropic suppression of the Schur complement:
as the score vectors $\partial_{\theta_I}\mathbf s$ and $\partial_{\theta_N}\mathbf s$ become nearly colinear near the endpoints, 
the conditional information $J_{I|N}$ is reduced more strongly than $J_{II}$. 
Physically, the model abstracts scenarios where an unwanted phase acts only on a single quadrature/polarization, 
breaking rotational symmetry; this selective action makes it a minimal yet expressive setting to study the interplay between interest and nuisance.

\subsection{Anisotropic shrinking model}
We consider the anisotropic shrinking model
\[
\mathbf{s}(\theta_I,\theta_N\mid r,\phi)
=\big(r\sin\phi\cos\theta_I,\; r\,\theta_N\sin\phi\sin\theta_I,\; r\cos\phi\big),
\]
with interest $\theta_I$ and nuisance $\theta_N\sim\mathrm{Unif}(0,1]$. 
The result is shown in Figure~\ref{fig:anisotropic}. 
\begin{figure}[!htbp]
\centering
\includegraphics[width=1.0\linewidth]{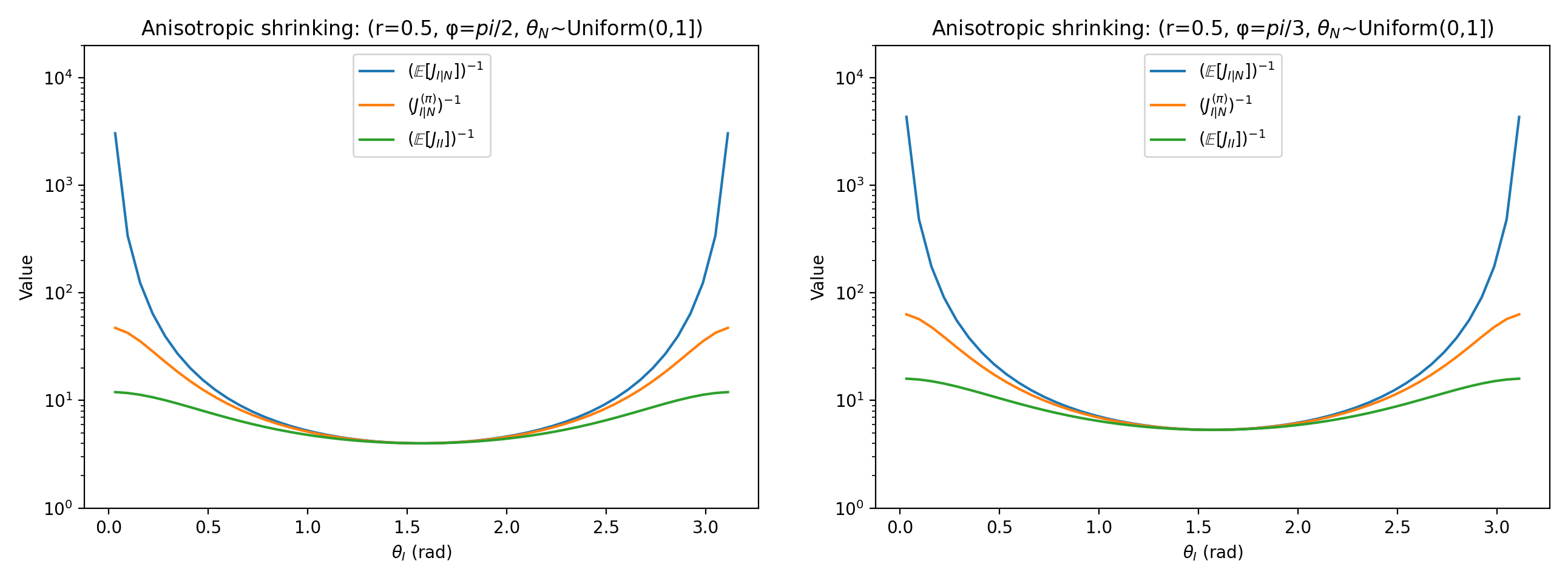}
\includegraphics[width=1.0\linewidth]{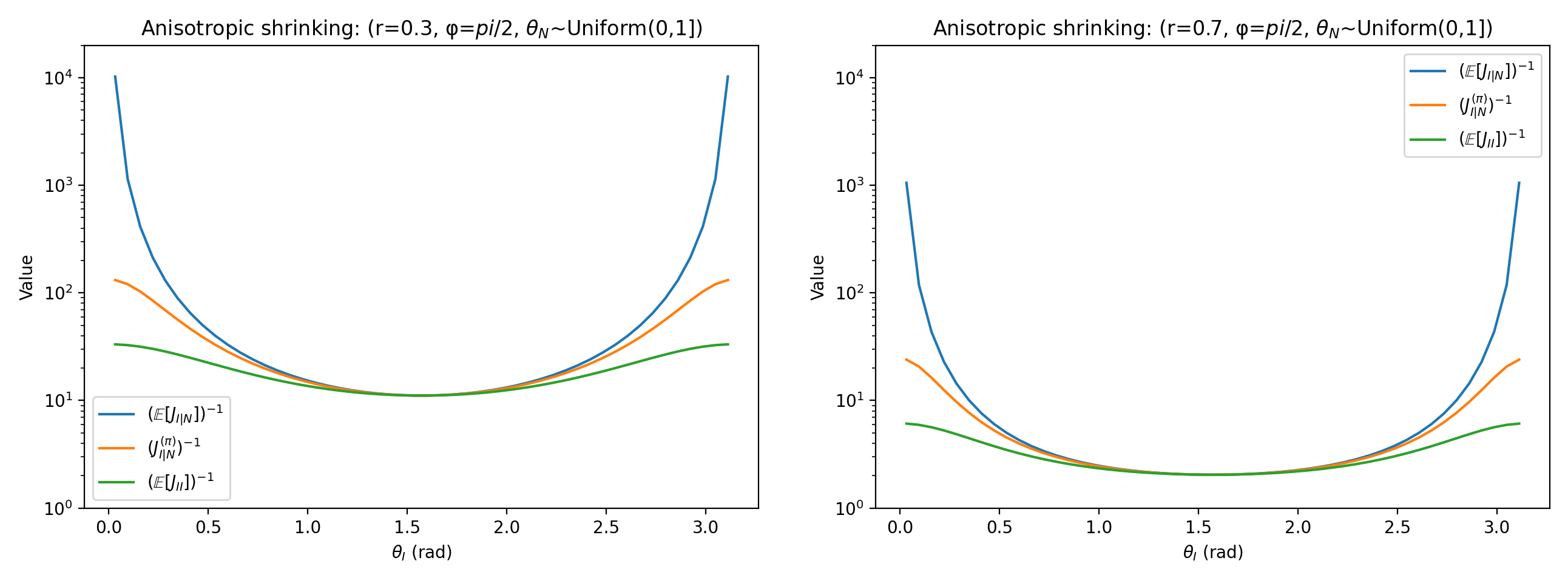}
\caption{Comparison of the hpQFIM and its two approximations for the model~(anisotropic shrinking) The model parameters are set as $(r=0.5, ~\phi=\pi/2)$, $(r=0.5,~ \phi=\pi/3)$, $(r=0.3,~\phi=\pi/2)$ and $(r=0.7,~\phi=\pi/2)$ with uniform $\theta_N \sim U [0,1]$. Blue: $[\mathbb{E}(J_{I|N})]^{-1}$; Orange: $(J^{(\pi)}_{I|N})^{-1}$; Green: $(\mathbb{E}[J_{II}])^{-1}$.
    }
\label{fig:anisotropic}
\end{figure}

We discuss the properties of this model. 
Denote, as before,
\[
U:=\big(\mathbb{E}_\pi[J_{I\mid N}]\big)^{-1},\qquad 
M:=\big(J^{(\pi)}_{I\mid N}\big)^{-1},\qquad 
L:=\big(\mathbb{E}_\pi[J_{II}]\big)^{-1}.
\]

First, we look at ordering and endpoint behavior. 
Across all panels the hybrid inequality $U\ge M\ge L$ holds point-wise in $\theta_I$. With a uniform prior over $\theta_N$, the plots are approximately symmetric about $\theta_I=\pi/2$ and attain their minima near the center.  
This symmetry follows from the trigonometric structure of the model, in which only $\sin\theta_I$ and $\cos\theta_I$ enter the QFIM blocks.
As $\theta_I\to 0$ or $\pi$, both $U$ and $M$ rise rapidly, while $L$ increases only moderately.  
This reflects a geometric degeneracy at the endpoints: $\partial_{\theta_I}\mathbf s$ and $\partial_{\theta_N}\mathbf s$ both carry a factor $\sin\theta_I$, so their norms---and hence the Schur complement $J_{I\mid N}$---are strongly suppressed there.  
Averaging over $\theta_N$ therefore drives $\mathbb{E}_\pi[J_{I\mid N}]$ to small values, inflating $U$ (and, to a lesser extent, $M$), whereas $J_{II}$ alone does not vanish at the same rate, keeping $L$ comparatively low.  
Thus $L$ becomes loose in a narrow neighborhood of the endpoints, while $M$ better reflects the cost of eliminating the nuisance.

Next, tightness in the central region should be stressed.
Around the symmetric center $\theta_I\approx \pi/2$, the three curves approach one another and the gap $U-M$ becomes small; $M$ nearly coincides with $U$ and also approaches $L$.  
Away from the degeneracy, ``average of Schur'' and ``Schur of averages'' yield very similar information.

Last, we examine the dependence on the model parameters $(r,\phi)$. 
The panels for
\[
(r,\phi)\in\{(0.5,\tfrac{\pi}{2}),\ (0.5,\tfrac{\pi}{3}),\ (0.3,\tfrac{\pi}{2}),\ (0.7,\tfrac{\pi}{2})\}
\]
exhibit the expected following two trends. First, increasing $r$ at fixed $\phi$ (compare $r=0.3$ vs.\ $0.7$ at $\phi=\pi/2$) increases Fisher information and hence lowers all three curves uniformly; the valley near $\theta_I=\pi/2$ deepens. Second, reducing $\sin\phi$ at fixed $r$ (e.g.,\ $\phi:\ \tfrac{\pi}{2}\to\tfrac{\pi}{3}$ for $r=0.5$) weakens the transverse component and raises all curves, with shapes essentially unchanged.

We summarize the behaviors of this model as follows. 
The anisotropic contraction along the nuisance-controlled axis amplifies endpoint degeneracy and creates a clear separation between $M$ and $L$ only where $\sin\theta_I$ is small; elsewhere $M$ remains close to $U$.  
Overall, the estimation precision is controlled chiefly by the scale $r\sin\phi$ and by avoiding the endpoint region, providing concrete guidance for operating points in hybrid estimation.

\section{Example (direction estimation)}
In this section we analyze a qubit model that admits closed-form expressions in the multiparameter setting, and then specialize it to a hybrid framework where a prior is placed only on radial nuisance parameter $r$, while the directional parameters $(\theta,\phi)$ are estimated in the frequentist sense. The benchmark is directly linked to entropic characteristics of the state: for a qubit with Bloch radius $r$, the spectrum is $\{(1\pm r)/2\}$ and the von Neumann entropy $S(\rho)$ is a monotone function of~$r$, making radius estimation operationally relevant for purity assessment.

\subsection{Example: Bloch-radius model with directional interest and radial nuisance}\label{subsec:bloch-model}
We consider single-qubit states in Bloch form
\begin{equation}\label{eq:bloch-state}
    \rho(r,\theta,\phi)
    = \tfrac{1}{2}\Big[
        I + r\big(\sin\theta\cos\phi\,\sigma_x
                 + \sin\theta\sin\phi\,\sigma_y
                 + \cos\theta\,\sigma_z\big)
      \Big],
\end{equation}
where the parameters of interest are the spherical angles $(\theta,\phi)\in[0,\pi]\times[0,2\pi)$ and the nuisance parameter is the Bloch radius $r\in(0,1)$. Let $\mathbf{s}(r,\theta,\phi)=r\,\hat{\mathbf n}(\theta,\phi)$ with
\[
\hat{\mathbf n}(\theta,\phi)=(\sin\theta\cos\phi,\ \sin\theta\sin\phi,\ \cos\theta).
\]
The parameter derivatives of the Bloch vector are
\[
\partial_r\mathbf s=\hat{\mathbf n},\qquad
\partial_\theta\mathbf s=r\,\partial_\theta\hat{\mathbf n},\qquad
\partial_\phi\mathbf s=r\,\partial_\phi\hat{\mathbf n},
\]
and the elementary spherical identities
\[
\hat{\mathbf n}\!\cdot\!\partial_\theta\hat{\mathbf n}
=\hat{\mathbf n}\!\cdot\!\partial_\phi\hat{\mathbf n}=0,\quad
\|\partial_\theta\hat{\mathbf n}\|=1,\quad
\|\partial_\phi\hat{\mathbf n}\|=\sin\theta,\quad
\det[\partial_\theta\hat{\mathbf n},\hat{\mathbf n},\partial_\phi\hat{\mathbf n}]=-\sin\theta ,
\]
will be used below.

\paragraph{SLD quantum Fisher information.}
The QFIM for the parameter order $(r,\theta,\phi)$, i.e., $\theta_1=r,\theta_2=\theta,\theta_3=\phi$, is
\[
    J(r,\theta,\phi)
    = \operatorname{diag}\!\big((1-r^2)^{-1},\, r^2,\, r^2\sin^2\theta\big).
\]
We thus have $J_{r\theta}=J_{r\phi}=J_{\theta\phi}=0$.
Recall that the parameters of interest are $\theta_I=(\theta,\phi)$, whereas the nuisance parameter is $\theta_N=r$. 

\paragraph{Hybrid partial information and CR-type bound.}
Treating $(\theta,\phi)$ as interest and $r$ as nuisance, the partial SLD information (conditioning on $r$) is the Schur complement
\[
    J_{I|N}
    = J_{II} - J_{IN} \big(J_{NN}\big)^{-1} J_{NI}.
\]
Since the cross terms vanish, we have
\begin{equation}\label{eq:oracle-info-sld}
    J_{I|N}(\theta,\phi;\,r)=J_{II}(\theta,\phi;\,r)
    = r^2\begin{pmatrix}1&0\\[2pt]0&\sin^2\theta\end{pmatrix}.
\end{equation}
Consequently, for any prior $\pi_r$ on $r$ the three quantities of Theorem~\ref{thm:hybridlbub} are identical, i.e.,
\begin{align}
(\mathbb E_{\pi_r}[J_{I|N}(\theta,\phi)])^{-1}=
\big(J^{(\pi_r)}_{I|N}(\theta,\phi)\big)^{-1}
= \big(\mathbb E_{\pi_r}[J_{II}(\theta,\phi)]\big)^{-1}
= \frac{1}{\mathbb E_{\pi_r}[r^2]}\begin{pmatrix}1&0\\[2pt]0&\sin^{-2}\theta\end{pmatrix}.
\end{align}
Thus, in point estimation (oracle $r$ known) the attainable variance for unbiased estimation of $(\theta,\phi)$ depends on the value of $r$ via \eqref{eq:oracle-info-sld}; in the hybrid setting it depends on the prior $\pi_r$ through $\mathbb E_{\pi_r}[r^2]$. Therefore, one can certify a direction-estimation risk lower bound via the prior without knowing the true $r$.

% ---------- DISCUSSION ----------
\section{Conclusions}
In this paper, we proposed a hybrid estimation framework that treats parameters of interest and nuisance parameters asymmetrically by placing a prior only on the nuisance sector while estimating the parameter of interest in the frequentist sense. Within this framework we (i) defined the hybrid partial quantum Fisher information matrix (hpQFIM) by prior-averaging the nuisance block and taking the Schur complement on the interest block; (ii) derived the associated hybrid Cram\'er-Rao-type (CR-type) bound; (iii) clarified the operational gain over pure point estimation---hybrid-optimal measurements depend on the prior over the nuisance rather than its unknown true value; and (iv) established inequalities that relate prior-averaged quantities and elucidate their limiting behaviors.

To make the discussion concrete, we analyzed an analytically solvable single-qubit model where the direction $(\theta,\phi)$ is the parameter of interest and the Bloch radius $r$ plays the role of nuisance (with a prior $\pi_r$). Because the QFIM is diagonal (cross terms vanish), the hybrid partial information for $(\theta,\phi)$ reduces to the prior average of the interest block, yielding the CR-type matrix bound
\[
\big(J^{(\pi_r)}_{I\mid N}\big)^{-1}
= \frac{1}{\mathbb{E}_{\pi_r}[r^2]}\,\operatorname{diag}\!\big(1,\sin^{-2}\theta\big),
\]
while with oracle knowledge of $r$ the bound scales as $1/r^2$; under a genuine prior the hybrid quantities coincide.

Beyond this solvable case, our inequalities position the hybrid bound between natural Bayesian and point-estimation surrogates, thereby quantifying how prior knowledge about nuisance parameters can be systematically leveraged without fully committing to a Bayesian treatment of all parameters. These results give a unified and operationally transparent picture of nuisance handling in quantum metrology.

Several directions follow naturally. We list three possible extensions of this work. 
\begin{itemize}
  \item \textbf{Tightness and achievability.} Determining conditions under which the hybrid CR-type bound is tight and characterizing the structure of achieving measurements- especially beyond the qubit radius example- remain open. Connections to D-invariant models and to measurement classes with symmetry constraints are promising~\cite{suzuki2019information}.
  \item \textbf{Prior modeling and robustness.} Moving from uniform priors to anisotropic families such as von Mises--Fisher priors enables a controlled interpolation between ignorance and alignment. Quantifying robustness of hybrid-optimal measurements against prior misspecification is an important practical question.
  \item \textbf{Full hybrid model.} In this work we focus on the canonical hybrid setting where the nuisance parameters are random (with a prior) and the parameters of interest are nonrandom (point-wise). A natural next step is the full hybrid model, in which parameters are partitioned into four classes: interest-random, interest-nonrandom, nuisance-random, and nuisance-nonrandom.

\end{itemize}

Overall, the hybrid viewpoint separates what must be learned (the parameter of interest) from what can be integrated out using prior structure (the nuisance), yielding bounds and design principles that are both rigorous and operationally meaningful. We expect this perspective to be broadly useful for quantum metrology in low-copy and resource-constrained scenarios where nuisance is inevitable.

%%%%%%%%%%%%%%%%%%%%%%%%%%%%%%%%%%%%%%%%%%
\vspace{6pt} 

%%%%%%%%%%%%%%%%%%%%%%%%%%%%%%%%%%%%%%%%%%

\section*{Data availability}
The data and program code that support the findings of this study are available from the corresponding author upon reasonable request.

\section*{Acknowledgments}
The work is partly supported by JSPS KAKENHI Grant Number JP24K14816, and ERATO “Super Quantum Entanglement” (Grant No. JPMJER2402) from JST. J.Z. is also supported by the research assistant scholarship at the University of Electro-Communications. Thanks to the suggestion from Dr.~Koichi Yamagata.

\appendix
\section[\appendixname~\thesection]{Quantum hybrid CR-type lower bound}
\label{app:Quanhybridcrlowerbound}
This part is the proof for quantum hybrid CR-type lower bound which means for any POVM $\Pi$ and any locally unbiased estimator $\hat\theta_I$,
\begin{align}
V_{\theta_I,\pi}(\Pi,\hat\theta_I)\ \succeq\ \big(J^{(\pi)}_{I\mid N}(\theta_I)\big)^{-1}.
\end{align}
This can be proved by the same process of van Trees inequality \cite{van2004detection}. \\
\textbf{Proof strategy:}
We first construct a prior-augmented information matrix, compare quantum and classical information under the prior, and then we consider the matrix covariance inequality. The desired $II$-block bound follows from a block-inverse (Schur complement) identity.

\noindent\textbf{Regularity.} We use standard conditions (dominated convergence/Leibniz rule, fixed support in $x$, and a $\theta_I$–independent prior) to interchange $\partial_{\theta_I}$ with integrals.

\noindent\textbf{Prior-averaged quantum information matrix:} The following matrix collects the Fisher information averaged over the nuisance prior and add the prior Fisher $J_\pi$ on the nuisance block; this is the natural hybrid analogue of the van Trees setup. Define the prior-averaged quantum information matrix
\begin{align*}
G^{(\pi)}(\theta_I) :=\ \mathbb{E}_{\pi}\!\big[J(\theta_I,\theta_N)\big]\;+ \begin{pmatrix}
    0 & 0 \\ 0 & J_\pi
\end{pmatrix},
\end{align*}
and the corresponding classical information matrix 
\begin{align*}
G^{(\pi)}(\theta_I\,|\,\Pi) :=\ \mathbb{E}_{\pi}\!\big[J(\theta_I,\theta_N\,|\,\Pi)\big]\;+ \begin{pmatrix}
    0 & 0 \\ 0 & J_\pi
\end{pmatrix}.
\end{align*}
By the construction of quantum Fisher information matrix, we the following inequality with classical Fisher information matrix \cite{braunstein1994statistical}
\begin{align}
    J(\theta_I,\theta_N) \succeq J(\theta_I,\theta_N\,|\,\Pi),
\end{align}
with 
\begin{align*}
    J(\theta_I , \theta_N \, | \, \Pi)_{ij} = \mathbb E_{x|\theta_I, \theta_N} \left[ \frac{\partial \log p(x|\theta_I,\theta_N)}{\partial \theta_{I,i}} \frac{\partial \log p(x|\theta_I,\theta_N)}{\partial \theta_{I,j}} \right].
\end{align*}
Thus, 
\begin{align*}
    G^{(\pi)}(\theta_I) \succeq G^{(\pi)}(\theta_I\,|\,\Pi)\text{ and also }
    G^{(\pi)}(\theta_I\,|\,\Pi)^{-1} \succeq G^{(\pi)}(\theta_I)^{-1}.
\end{align*}
The next step is the core one of which the details are shown later. We notice
\begin{align}\label{eq:corestep}
    V_{\theta_I,\pi}(\Pi,\hat\theta_I)\ \succeq \left( G^{(\pi)}(\theta_I\,|\,\Pi)^{-1}\right)_{II}.
\end{align}
Therefore, we have the theorem,
\begin{align*}
    V_{\theta_I,\pi}(\Pi,\hat\theta_I)\ \succeq \left( G^{(\pi)}(\theta_I)^{-1}\right)_{II}=\;\big(J^{(\pi)}_{I\mid N}(\theta_I)\big)^{-1}.
\end{align*}
The remaining of this part is to prove the inequality \ref{eq:corestep}. Let the vectors of the functions of random variables be
\begin{align*}
    &f(x)=[ f_1(x)~ f_2(x)]^\top,\quad f_1(x)=\hat \theta_I(x)-\theta_I,\quad f_2(x)=\hat \theta_N(x)-\theta_N, \\
    &g(x)=[ g_1(x)~ g_2(x)]^\top,\quad g_1(x) = [\partial_i \log p(x|\theta_I,\theta_N)],\quad\\
    & g_2(x) = [\partial_a \log \pi(\theta_N) p(x|\theta_I,\theta_N)],
\end{align*}
where $\partial_i =\frac{\partial}{\partial \theta_{I,i}}$, $\partial_a =\frac{\partial}{\partial \theta_{N,a}}$. Let \(\mathbb E[\cdot]\) be the total expectation, 
\begin{align*}
    \mathbb E[\cdot] = \int dx \int d\theta_N \,\pi(\theta_N)\, p(x|\theta_I,\theta_N) \cdot.
\end{align*} Then by the covariance inequality \cite{van2004detection} which is 
\begin{align*}
     F \succeq T G^{-1} T^\top,
\end{align*}
where $F=\mathbb E[f f^\top]$, $T=\mathbb E[f g^\top]$, $G=\mathbb E[g g^\top]$, we obtain
\begin{align*}
    F=\mathbb E[f f^\top]=\mathbb E \left[  \begin{pmatrix}
        f_1 \\ f_2
    \end{pmatrix} \begin{pmatrix}
        f_1 & f_2
    \end{pmatrix}\right] 
    = \mathbb E \left[ \begin{pmatrix}
        (\hat \theta_I-\theta_I)(\hat \theta_I-\theta_I)^\top &
        (\hat \theta_I-\theta_I)(\hat \theta_N-\theta_N)^\top \\
        (\hat \theta_N-\theta_N)(\hat \theta_I-\theta_I)^\top &
        (\hat \theta_N-\theta_N)(\hat \theta_N-\theta_N)^\top
    \end{pmatrix} \right].
\end{align*}
For the parameter vector $(\theta_I,\theta_N)$, $F$ is the MSE of it.\\  Claim: \(T\) is the identity matrix. Proof:
\begin{align*}
    T=\mathbb E[fg^\top] &= \mathbb E \left[ \begin{pmatrix}
        \hat \theta_I-\theta_I \\ \hat \theta_N - \theta_N
    \end{pmatrix} \begin{pmatrix}
        \partial_i^\top \log p(x|\theta_I,\theta_N)&
        \partial_a^\top \log (\pi(\theta_N) p(x|\theta_I,\theta_N)) 
    \end{pmatrix} \right] \\
    &= \mathbb E \left[ \begin{pmatrix}
        (\hat \theta_I-\theta_I) \nabla_{\theta_I}^\top & (\hat \theta_I-\theta_I) \nabla_{\theta_N}^\top\\
        (\hat \theta_N-\theta_N) \nabla_{\theta_I}^\top & (\hat \theta_N-\theta_N) \nabla_{\theta_N}^\top
    \end{pmatrix} \right],
\end{align*}
where $\nabla_{\theta_I} = [\partial_i \log p(x|\theta_I,\theta_N)]$, $\nabla_{\theta_N}= [\partial_a \log( \pi(\theta_N) p(x|\theta_I,\theta_N) )]$. By the result of the van Trees inequality, we know $(\hat \theta_N-\theta_N) \nabla_{\theta_N}^\top = I_{\theta_N}$, $I_{\theta_N}$ means the identity matrix in the dimension of $\theta_N$. We need to check it one by one since the expectations are different. 
\begin{align*}
    \mathbb E[(\hat \theta_I-\theta_I)\nabla_{\theta_I}^\top]_{ij} &= \int \int dx \, d\theta_N \,\pi(\theta_N)\, p(x|\theta_I,\theta_N)( \hat \theta_{I,i}(x) -\theta_{I,i}) \partial_j \log p(x|\theta_I,\theta_N)\\ 
    &= \int \int dx \, d\theta_N \,\pi(\theta_N)\, ( \hat \theta_{I,i}(x) -\theta_{I,i}) \partial_j p(x|\theta_I,\theta_N)\\  
    &= \int \int dx \, d\theta_N \, ( \hat \theta_{I,i}(x) -\theta_{I,i}) \partial_j p(x,\theta_N |\theta_I)\\  
    &= \int dx \, ( \hat \theta_{I,i}(x) -\theta_{I,i}) \partial_j p(x|\theta_I)\\
    &= \partial_j \int dx \, \hat \theta_{I,i} (x) p(x|\theta_I) = \delta_{i,j},
\end{align*}
which is given by interchanging the derivative and integral. Then we need to show the remaining two off-diagonal items are zero matrix.
\begin{align*}
    \mathbb E[(\hat \theta_N-\theta_N) \nabla_{\theta_I}^\top]_{ai} &= \!\int \!\int\!  dx  d\theta_N \,\pi(\theta_N)\, p(x|\theta_I,\theta_N) (\hat \theta_{N,a}(x)-\theta_{N,a}) \partial_i \log p(x|\theta_I,\theta_N) \\
    &=\int \int dx \, d\theta_N \,\pi(\theta_N)\, (\hat \theta_{N,a}(x)-\theta_{N,a}) \partial_i p(x|\theta_I,\theta_N)\\
    &=\int \int dx \, d\theta_N \,(\hat \theta_{N,a}(x)-\theta_{N,a}) \partial_i p(x,\theta_N|\theta_I)\\
    &=\int dx \, \hat \theta_{N,a} \,\partial_i p(x|\theta_I) -  \int d\theta_N \, \theta_{N,a} \,\partial_i p(\theta_N|\theta_I)\\
    &= \partial_i \theta_{N,a} - \partial_i \int d \theta_N \, \theta_{N,a} p(\theta_N|\theta_I) \\
    &= 0-0=0,
\end{align*}
where we used the independence between \(\theta_I\) and \( \theta_N \).
\begin{align*}
    \mathbb E[(\hat \theta_I-\theta_I) \nabla_{\theta_N}^\top]_{ia} 
    &= \scalebox{1}{$\displaystyle
      \int\!\!\int dx\, d\theta_N\, \pi(\theta_N)\, p(x|\theta_I,\theta_N)\,
      (\hat \theta_{I,i}(x)-\theta_{I,i})\,
      \partial_a\log\!\big(\pi(\theta_N)\, p(x|\theta_I,\theta_N)\big)
    $}\\
    &= \int \int dx \, d\theta_N  \,(\hat \theta_{I,i}(x)-\theta_{I,i}) \partial_a p(x,\theta_N|\theta_I)\\
    &= \int  dx  \,(\hat \theta_{I,i}(x)-\theta_{I,i}) \partial_a p(x|\theta_I) = 0,
\end{align*}
which is similar to the previous one. Thus we obtain that $T$ is the identity matrix. Then we factorize $G$ in
\begin{align*}
    G= \mathbb E \left[ \begin{pmatrix}
        \nabla_{\theta_I} \nabla_{\theta_I}^\top & \nabla_{\theta_I} \nabla_{\theta_N}^\top\\
        \nabla_{\theta_N} \nabla_{\theta_I}^\top & \nabla_{\theta_N} \nabla_{\theta_N}^\top
    \end{pmatrix} \right],
\end{align*}
and calculate the item one by one. Firstly, we get the form of $\mathbb E[\nabla_{\theta_N} \nabla_{\theta_N}^\top]=J_\pi + \mathbb E_\pi[J_{\theta_N}]$ by
\begin{align*}
    & \quad \mathbb E[\nabla_{\theta_N} \nabla_{\theta_N}^\top]_{ab}  \\
    &= \int \int \partial_a \left(\log \pi(\theta_N) p(x|\theta_I,\theta_N) \right) \partial_b \left(\log \pi(\theta_N) p(x|\theta_I,\theta_N) \right) p(x|\theta_I,\theta_N) \pi(\theta_N) \,d\theta_N \,dx\\
    &= \int \int \partial_a \left( \ell(\theta_N) + \ell(x|\theta_I,\theta_N)  \right) \partial_b \left( \ell(\theta_N) +\ell(x|\theta_I,\theta_N) \right) p(x|\theta_I,\theta_N) \pi(\theta_N) \,d\theta_N \,dx\\
    &= \displaystyle \int \partial_a \ell(\theta_N) \partial_b \ell(\theta_N) \pi(\theta_N) d\theta_N \\
    &\quad + \int \! \int \left( \partial_a \ell(\theta_N) \partial_b \ell(x|\theta_I,\theta_N) + \partial_a \ell(x|\theta_I,\theta_N)  \partial_b \ell(\theta_N)\right) p(x|\theta_I,\theta_N) \pi(\theta_N) \,d\theta_N \,dx \\
    &\quad+ \int \int \partial_a \ell(x|\theta_I,\theta_N)  \partial_b \ell(x|\theta_I,\theta_N) p(x|\theta_I,\theta_N) \pi(\theta_N) \,d\theta_N \,dx \\
    &= J_\pi + 0 + \int J_{\theta_N} \,\pi(\theta_N) \,d\theta_N \\
    &= J_\pi + \mathbb E_\pi [J_{\theta_N}],
\end{align*}
where 
\begin{align*}
    J_\pi &= \int \partial_a \ell(\theta_N) \partial_b \ell(\theta_N) \pi(\theta_N) \,d\theta_N,\quad\\
    J_{\theta_N} &= \int  \partial_a \ell(x|\theta_I,\theta_N)  \partial_b \ell(x|\theta_I,\theta_N) p(x|\theta_I,\theta_N) \,dx = \mathbb E_{x|\theta_I,\theta_N}\!\left[\partial_a \ell(x|\theta_I,\theta_N) \partial_b \ell(x|\theta_I,\theta_N) \right] ,\\ \mathbb E_{x|\theta_I,\theta_N}[\cdot] &= \int \cdot \, p(x|\theta_I,\theta_N)\,dx,\ \ \mathbb E_\pi[\cdot] = \int \cdot \, \pi(\theta_N)\,d\theta_N,\\
    \ell(\theta_N)&= \log \pi(\theta_N),~ \ell(x|\theta_I, \theta_N) = \log p(x|\theta_I,\theta_N).
\end{align*}
To show the middle part of the $ab$-th item is zero,
\begin{align*}
    \int \int \partial_a \ell(\theta_N) \partial_b \ell(x|\theta_I,\theta_N)  p(x|\theta_I,\theta_N) \pi(\theta_N) \,d\theta_N \,dx &= \int \int \partial_a \pi(\theta_N) \partial_b p(x|\theta_I,\theta_N) \,d\theta_N \,dx \\
    &= \int \partial_a \pi(\theta_N) \,d\theta_N \, \partial_b \int p(x|\theta_I,\theta_N)  \,dx \\
    &=\int \partial_a \pi(\theta_N) \,d\theta_N \, \partial_b (1) =0.
\end{align*}
The item of the parameter of interest $\theta_I$ is 
\begin{align*}
    \mathbb E[\nabla_{\theta_I} \nabla_{\theta_I}^\top]_{ij} &= \int \int \partial_i \log p(x|\theta_I,\theta_N) \partial_j \log p(x|\theta_I,\theta_N) p(x|\theta_I,\theta_N) \pi(\theta_N) \,d\theta_N \,dx\\
    &= \int \int \partial_i \ell(x|\theta_I,\theta_N) \partial_j \ell(x|\theta_I,\theta_N) p(x|\theta_I,\theta_N) \pi(\theta_N) \,d\theta_N \,dx\\
    &= \int J_{\theta_I} \,\pi(\theta_N) \,d\theta_N = \mathbb E_\pi [J_{\theta_I}],
\end{align*}
where $J_{\theta_I} = \int \partial_i \ell(x|\theta_I,\theta_N) \partial_j \ell(x|\theta_I,\theta_N) p(x|\theta_I,\theta_N) \,dx $.
The off-diagonal $i a$-th item is 
\begin{align*}
    \mathbb E[\nabla_{\theta_I} \nabla_{\theta_N}^\top]_{ia}
    &=\int \int dx \, d\theta_N \,\partial_i \log p(x|\theta_I,\theta_N) \partial_a \log ( p(x|\theta_I,\theta_N) \pi(\theta_N) )  \, \pi(\theta_N)\, p(x|\theta_I,\theta_N)\\
    &=\int \int dx \, d\theta_N \,\partial_i \ell(x|\theta_I,\theta_N) \big(\partial_a \ell(x|\theta_I,\theta_N) + \partial_a \ell(\theta_N) \big)  \, \pi(\theta_N)\, p(x|\theta_I,\theta_N) \\
    &= \int J_{(\theta_I,\theta_N)} \,\pi(\theta_N) \,d\theta_N +\int \int dx \, d\theta_N \,\partial_i \ell(x|\theta_I,\theta_N)  \partial_a \ell(\theta_N)  \, \pi(\theta_N)\, p(x|\theta_I,\theta_N) \\
    &= \mathbb E_{\theta_N}[J_{(\theta_I,\theta_N)}] + \int \int dx \, d\theta_N \,\partial_i p(x|\theta_I,\theta_N)  \partial_a \pi(\theta_N) \\
    &= \mathbb E_{\theta_N}[J_{(\theta_I,\theta_N)}] + \int d\theta_N \,\partial_a \pi(\theta_N) \,\partial_i (1) =  \mathbb E_{\theta_N}[J_{(\theta_I,\theta_N)}].
\end{align*}
As a result 
\begin{align*}
        G= \mathbb E \left[ \begin{pmatrix}
        \nabla_{\theta_I} \nabla_{\theta_I}^\top & \nabla_{\theta_I} \nabla_{\theta_N}^\top\\
        \nabla_{\theta_N} \nabla_{\theta_I}^\top & \nabla_{\theta_N} \nabla_{\theta_N}^\top
    \end{pmatrix} \right] = \mathbb E_\pi \left[ \begin{pmatrix}
        J_{(\theta_I,\theta_I)} & J_{(\theta_I,\theta_N)}\\
        J_{(\theta_N,\theta_I)} & J_{(\theta_N,\theta_N)}
    \end{pmatrix} \right] +
    \begin{pmatrix}
        0 & 0\\
        0 & J_\pi
    \end{pmatrix}.
\end{align*}
This implies our final result,
\begin{align*}
    V_{\theta_I, \theta_N, \pi}(\Pi,\hat \theta_I, \hat \theta_N) =  \mathbb E\begin{pmatrix}
        (\hat \theta_I-\theta_I)(\hat \theta_I-\theta_I)^\top &
        (\hat \theta_I-\theta_I)(\hat \theta_N-\theta_N)^\top \\
        (\hat \theta_N-\theta_N)(\hat \theta_I-\theta_I)^\top &
        (\hat \theta_N-\theta_N)(\hat \theta_N-\theta_N)^\top
    \end{pmatrix} \succeq G^{(\pi)} (\theta_I\, |\, \Pi)^{-1}.
\end{align*}
After taking the \( (I,I)\) entry, we obtain the demanding inequality,
\begin{align}
     V_{\theta_I, \pi}(\Pi,\hat \theta_I) \succeq \left( G^{(\pi)} (\theta_I\, |\, \Pi)^{-1}\right)_{II}.
\end{align}

\section[\appendixname~\thesection]{Lower and upper approximations for the hpQFIM}
\label{app:lowerandupper}
This part is the proof for Lower and upper approximiatons for the hpQFIM $J^{(\pi)}_{I|N}$. 
Note the following inequalities hold:
\begin{align}
\mathbb{E}_\pi[J_{II}(\theta_I,\theta_N)]
\ \succeq\ 
J^{(\pi)}_{I|N}(\theta_I)
\ \succeq\
\mathbb{E}_\pi \left[J_{I|N}(\theta_I,\theta_N)\right] .
\end{align}
Firstly, we proof the lower approximation such that
\begin{align*}
J^{(\pi)}_{I|N}(\theta_I) \succeq
\mathbb{E}_\pi \left[J_{I|N}(\theta_I,\theta_N)\right].
\end{align*}
By the Schur complement, we have that for any positive semidifinite \(m \times m\) matrix \(A\) and \( m \times n\) matrix \(B\), the following holds \cite{bhatia},
\begin{align*}
\begin{bmatrix} A & B \\ B^\dag & B^\dag A^{-1}B \end{bmatrix}\ \succeq\ 0,
\end{align*}
which means
\begin{align*}
\begin{bmatrix} \mathbb E[A] & \mathbb E[B] \\ \mathbb E[B^\dag] & \mathbb E[B^\dag A^{-1}B] \end{bmatrix}\ \succeq\ 0.    
\end{align*}
Since \(\mathbb E[A] \succeq 0\), this implies
\begin{align*}
\mathbb E_\pi[B^\dag A^{-1}B] \succeq \mathbb E_\pi[B^\dagger] \left( \mathbb E_\pi[A] \right)^{-1} \mathbb E_\pi[B].
\end{align*}
Substitute \(A= J_{NN},~ B=J_{NI}\) we have,
\begin{align*}
&\mathbb E_\pi[J_{IN} J_{NN}^{-1} J_{NI}] \succeq \mathbb E_\pi[J_{IN}] \left( \mathbb E_\pi[J_{NN}] \right)^{-1} \mathbb E_\pi[J_{NI}]    \\
\Longleftrightarrow &  \mathbb E_\pi[J_{II}] - \mathbb E_\pi[J_{IN} J_{NN}^{-1} J_{NI}] \preceq  \mathbb E_\pi[J_{II}] - \mathbb E_\pi[J_{IN}] \left( \mathbb E_\pi[J_{NN}] \right)^{-1} \mathbb E_\pi[J_{NI}] \\
\Longleftrightarrow &\mathbb{E}_\pi \left[J_{I|N}(\theta_I,\theta_N)\right]
\ \preceq \ 
J^{(\pi)}_{I|N}(\theta_I).
\end{align*}
Next, we prove the upper approximation such that 
\begin{align*}
J^{(\pi)}_{I|N}(\theta_I)
\ \preceq \ \mathbb{E}_\pi[J_{II}(\theta_I,\theta_N)].
\end{align*}
This is given by discarding the item in hpQFIM \( J^{(\pi)}_{I|N}(\theta_I) \),
\begin{align*}
    J^{(\pi)}_{I|N}(\theta_I) = \mathbb{E}_{\pi}\big[J_{II}\big]
\;-\;
\mathbb{E}_{\pi}\big[J_{IN}\big]\,
\Big(\,\mathbb{E}_{\pi}\big[J_{NN}\big]+J_\pi\Big)^{-1}
\mathbb{E}_{\pi}\big[J_{NI}\big] \preceq \mathbb E_\pi[J_{II}].
\end{align*}

%%%%%%%%%%%%%%%%%%%%%%%%%%%%%%%%%%%%%%%%%%
%\isPreprints{}{% This command is only used for ``preprints''.
%} % If the paper is ``preprints'', please uncomment this parenthesis.
%\printendnotes[custom] % Un-comment to print a list of endnotes 
%abbrvnat  IEEEtranN  unsrt
\bibliographystyle{IEEEtranN}
\bibliography{myref}

%\begin{thebibliography}{999}

% If authors have biography, please use the format below
%\section*{Short Biography of Authors}
%\bio
%{\raisebox{-0.35cm}{\includegraphics[width=3.5cm,height=5.3cm,clip,keepaspectratio]{Definitions/author1.pdf}}}
%{\textbf{Firstname Lastname} Biography of first author}
%
%\bio
%{\raisebox{-0.35cm}{\includegraphics[width=3.5cm,height=5.3cm,clip,keepaspectratio]{Definitions/author2.jpg}}}
%{\textbf{Firstname Lastname} Biography of second author}

% For the MDPI journals use author-date citation, please follow the formatting guidelines on http://www.mdpi.com/authors/references
% To cite two works by the same author: \citeauthor{ref-journal-1a} (\citeyear{ref-journal-1a}, \citeyear{ref-journal-1b}). This produces: Whittaker (1967, 1975)
% To cite two works by the same author with specific pages: \citeauthor{ref-journal-3a} (\citeyear{ref-journal-3a}, p. 328; \citeyear{ref-journal-3b}, p.475). This produces: Wong (1999, p. 328; 2000, p. 475)

%%%%%%%%%%%%%%%%%%%%%%%%%%%%%%%%%%%%%%%%%%
%% for journal Sci
%\reviewreports{\\
%Reviewer 1 comments and authors’ response\\
%Reviewer 2 comments and authors’ response\\
%Reviewer 3 comments and authors’ response
%}
%} % If the paper is ``preprints'', please uncomment this parenthesis.
\end{document}